\newcommand{\cnd}{\mskip 0.5mu{|}\mskip 0.5mu}
\DeclareMathOperator{\KS}{\mathrm{C}\mskip 0.1mu}
\let\le=\leqslant
\let\ge=\geqslant
\newtheorem{theorem}{Theorem}
\newtheorem{prop}{Proposition}
\newtheorem{lemma}{Lemma}
\newtheorem{remark}{Remark}
\title{Optimal bounds for dissatisfaction in perpetual voting}
\author{
    %Authors
    % All authors must be in the same font size and format.
   Alexander Kozachinskiy~\textsuperscript{\rm 1}, Alexander Shen~\textsuperscript{\rm 2}, Tomasz Steifer~\textsuperscript{\rm 34}
}
\title{My Publication Title --- Single Author}
\author {
    Author Name
}
\title{My Publication Title --- Multiple Authors}
\author {
    % Authors
    First Author Name\textsuperscript{\rm 1,\rm 2},
    Second Author Name\textsuperscript{\rm 2},
    Third Author Name\textsuperscript{\rm 1}
}
\begin{document}

\maketitle

\begin{abstract}
In perpetual voting, multiple decisions are made at different moments in time. Taking the history of previous decisions into account allows us to satisfy properties such as proportionality over periods of time. In this paper, we consider the following question: is there a perpetual approval voting method that guarantees that no voter is dissatisfied too many times? We identify a sufficient condition on voter behavior ---which we call 'bounded conflicts' condition---under which a sublinear growth of dissatisfaction is possible. We provide a tight upper bound on the growth of dissatisfaction under bounded conflicts, using techniques from Kolmogorov complexity. We also observe that the approval voting with binary choices mimics the machine learning setting of prediction with expert advice. This allows us to present a voting method with sublinear guarantees on dissatisfaction under bounded conflicts, based on the standard techniques from prediction with expert advice. \end{abstract}

% Uncomment the following to link to your code, datasets, an extended version or similar.
%
% \begin{links}
%     \link{Code}{https://aaai.org/example/code}
%     \link{Datasets}{https://aaai.org/example/datasets}
%     \link{Extended version}{https://aaai.org/example/extended-version}
% \end{links}

\section{Introduction}
Imagine a group of friends who meet every week to go somewhere together. There are several options -- going to the park, going to the cinema, and so on, but different people like only some of the options. For example, somebody says: ``I don't want to go to the park because it is spring and I have an allergy. And I don't want to go to the cinema because I don't like any of the movies that are currently showing''. Then the other friend says: ``I also don't want to go to the cinema, and I don't want to go to dances, because I broke my leg''. And so on, all friends indicate which option they \emph{approve} and which \emph{disapprove}. We have to choose one option for everybody to go there. People, not approving this option, will be \emph{dissatisfied}. %It might be that no option is approved by everybody, then dissatisfied people are inevitable.

This happens not once but a number of times and the preferences of friends might arbitrarily change (for instance, a person who did not want to go to the cinema now might want to see a new movie). When we make a decision, we assume that the preferences of friends in the future are not known to us -- we only see what they approve this week and what they wanted in previous weeks.

%Each time, we choose one option, knowing all the previous history. In particular, we know, how many times each friend was dissatisfied, so we might try to listen more to people who were dissatisfied more often (otherwise, they might complain that it is ``unfair '' for them).

This setting has been recently introduced by \citet{lackner2020perpetual} under the name \emph{perpetual voting}. The goal here is to devise a voting method that would lead to ``fair'' results. For instance, what would be fair in the following situation: 8 friends meet each weekend to go for dinner, but there are just 2 places, and 5 friends want to go to a pizza place and 3 to a curry place (never changing their preferences and approving just one of the options)? It is natural to say that we have to choose the pizza place roughly 5/8 fraction of times, proportionally to the number of people, wanting it.

Using simple majority vote is not a good idea here, it will choose the pizza place all the time. Lackner gives an example of the algorithm that will work better: each time, define the weight of a person as $1/(1 + s)$, where $s$ is the number of times this person was satisfied, and choose the place with a bigger sum of the weights of people that want it. One can show that out of every 8 times, 5 times it will choose the pizza place and 3 times the curry place.

More generally, Lackner introduced \emph{simple proportionality}, which is fairness in the following sense: in a situation when preferences do not change, and everybody approves just one option, every option has to be chosen the number of times which is proportional to the number of people, approving this option. \citet{lackner2023proportional} studied simple proportionality for two classes of voting methods called \emph{loss-based WAMs} and \emph{win-based WAMs} (WAM = weighted approval method). A loss-based WAM is a voting method where every person gets a positive weight, which is a function of the number of times this person was \emph{dissatisfied}, and then the option with the biggest total weight is chosen. A win-base WAM is the same thing, but the weight of a person is determined by the number of times this person was \emph{satisfied}. Lackner and Maly show that there is no simply proportional loss-based WAM, and they characterize all win-based WAMs.

%There are interesting simply proportional methods that are not WAM. For instance, Lackner~\cite{lackner2020perpetual} shows that the following method called Perpetual Consensus is simply proportional: imagine that every person has a bank account, with a potentially negative balance. In every round, we choose an option with the biggest sum of balances of people, approving this option. Now, balances are updated as follows: in every round, every person gets one coin on their account, but satisfied  people have to pay back these deposits in equal shares so that the sum of balances stays the same (if there $n$ people, and $s$ of them are satisfied, dissatisfied people get $+1$ on their account, and satisfied people get $1 - \frac{n}{s}$).

Extending simple proportionality to a setting when agents might approve more than one option, and, moreover, might change their preferences over time, is a delicate task~\cite{bulteau2021justified}, and a recent work of \citet{chandak2024proportional} gives an excellent overview of this topic. Besides proportionality, \citet{lackner2020perpetual}  have formalized and studied other notions of fairness like the independence of uncontroversial decisions (rounds of voting where there is an option, satisfying everyone, should not affect the other rounds)
and dry spells (how many times a person can be dissatisfied in a row).
%Most recently, \citet{neoh2024welfare} looked at perpetual voting methods aimed at maximizing various welfare functions. In particular, they showed NP-completeness of certain decision problem related to maximizing egalitarian welfare and Nash welfare. 

\subsection{Our contribution.} 
In this paper, we introduce a different kind of question, namely:\\
\textbf{Question:} \textit{Is there a voting method that guarantees that each voter is dissatisfied only a small number of times?}

In other words, we want to have a perpetual voting method that allows us to minimize dissatisfaction of every voter. Here, the dissatisfaction is measured as the number of decisions in which the outcome is not approved by the voter. \citet{elkind2024temporal} studied this question in the offline regime, and showed that it is NP-hard to find an optimal way to minimize the dissatisfaction of every voter.
\citet{lackner2020perpetual} considered a related notion of \emph{dry spells} of a voter $v$, that is, sequences of consecutive decisions, during which $v$ is always dissatisfied, and exemplified some methods which guarantee a uniform bound on the length of a dry spell. Our interest is in a somewhat harder task---we want to guarantee that each voter is satisfied often, instead of just asking for each voter to be satisfied at least once in a while.

In the general setting, we face obstacles very quickly.
Imagine a scenario where there are two people, one of whom approves only pizza and the other approves only curry all the time. One of them will be unhappy half the time. If we want a strategy whose guaranteed dissatisfaction for everyone is sublinear in the number of decisions, we have to do something about this.

For that, we introduce a parameter called the \emph{conflict number}. In the case of two alternatives, this parameter can be defined as the maximal number of times a pair of agents does not have a commonly approved option. In the general case, we have to consider the same thing for all subgroups of agents, not exceeding the size of the number of alternatives.
As we just saw, when the conflict number is not bounded by something sublinear in the number of decisions, strategy with the sublinear dissatisfaction is impossible. Surprisingly, we show the converse: if the conflict number is bounded by something sublinear, there is a strategy with sublinear dissatisfaction (ignoring factors that are logarithmic in the number of agents). For that,
we introduce a perpetual voting rule, motivated by a standard machine learning method, the Exponential Weights Algorithm.

We then study the minimal achievable dissatisfaction in a regime when the conflict number is bounded by something negligible compared to the number of decisions.  We derive the optimal bound, which, however, we do not know if one can reach with the Exponential Weights Algorithm or any other computationally efficient strategy. This is because our proof method is non-constructive, relying on inequalities for Kolmogorov complexity.

%The Exponential Weights Algorithms  

%We obtain optimal bounds for this regime, using for the upper bound 

% We also give a nonconstructive existence proof for a voting method with an even better upper bound. Our argument uses Kolmogorov complexity and as such may be of independent interest. This bound is optimal.

 Finally, we discuss a class of simpler algorithms, containing some of the previously studied voting rules (Simple Majority and Perpetual Equality \cite{lackner2020perpetual}), and show that they fail to guarantee sublinear dissatisfaction, even under the bounded conflicts condition.

  \section{Formal setting and contributions} Following~\cite{lackner2020perpetual}, by perpetual voting with $k$ options, $N$ agents, and $T$ rounds, we mean the following game, played between two players that we will call the Decision Maker and the Adversary in rounds $r  = 1, \ldots, T$, where in the $r$-th round:
\begin{itemize}
    \item the Adversary picks $N$ sets $S_1^{(r)}, \ldots, S_N^{(r)}\subseteq \{1, \ldots, k\}$, where $S_i^{(r)}$ is  understood as the set of options, approved by the $i$-th agent in the $r$-th round;
    \item the Decision Maker picks an option $\theta^{(r)}\in\{1, \ldots, k\}$. 
\end{itemize}
We note that this setting does not assume that the set of alternatives is the same  at each round. We only denote by $k$ the maximal number of options in one round (and it can be less than $k$ in other rounds) and arbitrarily index these alternatives by numbers from $1$ to $k$ in each round, while at different rounds these can be essentially different alternatives (like choosing a restaurant to go to one day and a park to go to  
the other day)

For any play of this game, we define the \emph{dissatisfaction} of the $i$-th agent in this play as the number of rounds where the option, chosen by the algorithm, was not approved by this agent:
\begin{align*}
D_i &=\mathbb{I}\{\theta^{(1)}\notin S_i^{(1)}\} + \ldots+\mathbb{I}\{\theta^{(T)}\notin S_i^{(T)}\},\\ i &= 1, \ldots, N.
\end{align*}

In this paper, we initiate the study of strategies for the Decision Maker that aim to guarantee low dissatisfaction for all agents, i.e., to minimize $\max_{i =1}^N D_i$. If the Adversary is unrestricted, it can simply choose all sets to be empty every round, making every agent dissatisfied every time. Therefore, it makes sense to put some restrictions on the Adversary. In what follows, we exactly identify structural restrictions on the game under which a strategy of the Decision Maker with $o(T)$ dissatisfaction exists. Here the number of options is assumed to be a constant $k = O(1)$.

%Imagine a group of $k$ agents that cannot be mutually satisfied in any of the rounds, meaning that no alternative can satisfy them all. This can happen, for example, when the first agent in a group always disapproves the first alternative, the second agent always disapproves the second one, and so on. One agent in the group will be dissatisfied at least $T/k$ times. For a constant $k$, this means that no strategy of the Decision Maker can guarantee $o(T)$ dissatisfaction unless we bound the number of times a group of $k$ agents can be mutually unsatisfiable.

%But is this enough? From any group of mutually unsatisfiable agents, one can select an unsatisfiable subgroup of size at most $k$, by picking any agent disapproving the first option (there must be such an agent as otherwise the first option would have worked for everybody), then any agent disapproving the second option, and so on. However, apriori, one could imagine more complex behavioral patterns that can cause linear dissatisfaction. In this paper, we demonstrate that considering subsets of agents of size at most $k$ is enough.

%In fact, instead of subsets, we will consider all ordered $k$-tuples of agents, and we will bound, for each of these tuples, the number of rounds in which the first agent in the tuple disapproves the first option, the second agent disapproves the second option, and so on.

To this end, we introduce the concept of a conflict.
More specifically,
we say that a subset $A\subseteq \{1, 2, \ldots, N\}$ is in a conflict in the $r$-th round if there exists no $\theta\in\{1,2,\ldots, k\}$ such that $\theta\in S_i^{(r)}$ for every $i\in A$ (no option satisfies every agent in the subset $A$). Given some play in the perpetual voting with $k$ options, we define its conflict number as the maximum, over all $S\subseteq\{1,2,\ldots, N\}$ with $|S|\le k$, of the number of rounds  $S$ is in a conflict.

%we will say that a $k$-tuple of (not necessarily distinct) agents $(i_1, \ldots, i_k)$ is in a \emph{conflict} in a round $r$ if $j\notin S^r_{i_j}$, meaning that the $j$-th agent in a tuple disapproves the $j$-th option. Given some play in the perpetual voting with $k$ options, we define its conflict number as the maximum, over all $k$-tuples of agents, of the number of rounds this tuple was in a conflict. Finally, 
%by the  $C$-conflict perpetual voting we understand a modification of the game where the Adversary cannot make a move that makes the conflict number of a play bigger than $C$. 

%\begin{remark}
  %  One can ask, does it make a difference if we define the conflict number in terms of subsets, namely, as the maximum, over all $k$-sized subsets of agents, of the number of rounds this subset was mutually unsatisfiable. On one hand,  the conflict number for tuples can only be smaller than for subsets because the subset of agents in a conflicting tuple is unsatisfiable. On the other hand, they differ by at most the $k!$-factor because one of the $k!$ orderings of an unsatisfiable $k$-size subset gives a conflicting tuple. For a constant $k$, both numbers are thus equal up to a constant factor, and we choose to work with a smaller one, for tuples, making our results slightly stronger than as if we have used subsets. 
%\end{remark}

We start with 
an observation that as long as $N\ge k$, there is no strategy of the Decision Maker in the $C$-conflict perpetual voting, guaranteeing less than $C/k$ dissatisfaction. Namely, assume that for the first $C$ rounds, the Adversary makes everybody approve everything, except that
for $j = 1, \ldots, k$, the $j$-th agent disapproves the $j$-th option. After $C$ rounds, everybody approves everything without exceptions. In each of the first $C$ rounds, one of the first $k$ agents is dissatisfied, making one of these agents dissatisfied at least $C/k$ times. On the other hand, the $C$-conflict condition is trivially fulfilled as only in the first $C$ rounds somebody disapproves something.

This observation implies that when the conflict bound $C$ is linear in $T$, then for constant $k$  no strategy of the Decision Maker can guarantee $o(T)$ dissatisfaction for everybody. We show that, up to the $poly(\ln N, \ln C)$-factor, the converse is also true -- if $C = o(T)$, then there is a strategy of the Decision Maker, guaranteeing the  $o(T)$ dissatisfaction.
\begin{theorem}
\label{thm_kolmogorov}
    For every $k$ there exists a constant $W > 0$ that for every $N, T, C$ there exists a strategy in the $C$-conflict perpetual voting with  $k$ options, $N$ agents, and $T$ rounds that guarantees dissatisfaction at most $T^{1 - 1/k}\cdot C^{1/k} \cdot (\ln N\cdot \ln C)^{W}$.
\end{theorem}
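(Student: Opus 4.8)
The plan is to reduce the theorem to a clean combinatorial inequality relating individual dissatisfaction to the number of rounds in which a $k$-set of agents is \emph{simultaneously} dissatisfied, and then to establish that inequality by a counting argument phrased in the language of Kolmogorov complexity, which is what I expect to yield both the optimal exponent and the non-constructive existence of the strategy.

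First I would fix the right invariant for the Decision Maker. Write $X_r \subseteq \{1,\dots,N\}$ for the set of agents dissatisfied in round $r$ (the complement of the approval group of the chosen option). The bridge to the conflict number is that a single option can be chosen so as never to \emph{needlessly} co-dissatisfy a $k$-set: if all agents of some $A$ with $|A|\le k$ approve a common option, a good choice should avoid dissatisfying all of them at once. So I would aim for a strategy guaranteeing that, for every $k$-subset $A$ of the agents with near-maximal final dissatisfaction, the number of rounds with $A \subseteq X_r$ is at most (roughly) $C$; that is, every co-dissatisfaction of such a $k$-set is charged to a genuine conflict, of which there are at most $C$.

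Then I would turn to the combinatorial core. Let $D = \max_i D_i$, let $H = \{\, i : D_i \ge D/2 \,\}$ and $h = |H|$. Double counting $k$-subsets of $H$ against rounds gives $\sum_r \binom{|X_r \cap H|}{k} = \sum_{A \subseteq H,\, |A|=k} \#\{r : A \subseteq X_r\} \le \binom{h}{k}\, C$, while $\sum_r |X_r\cap H| = \sum_{i\in H} D_i \ge hD/2$. Convexity of $\binom{\cdot}{k}$ (Jensen over the $T$ rounds) lower-bounds the left side by $T\binom{hD/(2T)}{k}$, and after the $\binom{h}{k}$ factors cancel one is left with $D^k \lesssim T^{k-1} C$, i.e. $D \lesssim T^{1-1/k} C^{1/k}$, exactly the target exponent. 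The role of Kolmogorov complexity, as I see it, is to run this ``average-to-maximum'' conversion sharply and non-constructively: an incompressibility/symmetry-of-information argument should let one pin the witness agent, the relevant dissatisfaction level, and the conflict pattern to a typical (incompressible) instance, avoiding the slack of a crude Jensen step and supplying the online strategy by bounding the value of the game tree directly. The encoding costs of specifying the witness agent (among $N$) and the discretized dissatisfaction/conflict levels (of size $\poly(C)$), iterated across the $k$ layers of the $k$-set argument, are what I expect to accumulate into the $(\ln N \cdot \ln C)^{W}$ factor, with $W$ depending only on $k$.

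The hard part will be the online, adaptive nature of the game together with the non-constructiveness. Establishing the $k$-set invariant of the second step is delicate because there are only $k$ options and one cannot protect all $k$-subsets simultaneously; I would have to argue that a suitable choice controls the co-dissatisfaction of exactly the \emph{high}-dissatisfaction $k$-sets that drive the bound, and to handle the boundary case in which dissatisfaction is concentrated on very few agents (small $h$), where the balancing must be forced rather than assumed. This is precisely where I expect the proof to leave the constructive regime: rather than exhibiting a rule such as Exponential Weights and analyzing it, one shows by a compression argument that no adaptive adversary respecting the $C$-conflict bound can drive the game value above $T^{1-1/k}C^{1/k}\,(\ln N \ln C)^W$, thereby certifying a strategy without constructing it.
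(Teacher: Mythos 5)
There is a genuine gap, and it sits exactly where you placed the ``hard part''. Your combinatorial core rests on the invariant that every $k$-subset $A$ of high-dissatisfaction agents satisfies $\#\{r : A\subseteq X_r\}\lesssim C$, i.e.\ that every co-dissatisfaction of such a set can be charged to a genuine conflict. But co-dissatisfaction does not imply conflict: $A$ is in conflict only when \emph{no} option satisfies all of $A$, whereas the Decision Maker, having only $k$ options and up to $\binom{N}{k}$ sets to protect, must in general pick an option $\theta$ that dissatisfies entire $k$-subsets of $A_{r,\theta}$ which do share another commonly approved option. Already for $k=2$, if both $A_{r,1}\setminus A_{r,2}$ and $A_{r,2}\setminus A_{r,1}$ have size $\ge 2$, needless co-dissatisfaction of some pair is unavoidable in that round, and the adversary chooses adaptively which agents end up high-dissatisfaction, so restricting attention to ``high'' $k$-sets does not rescue the invariant. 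Separately, even granting the invariant, your Jensen step yields $D \lesssim T^{1-1/k}C^{1/k} + kT/h$ where $h=|H|$, which is vacuous when dissatisfaction is concentrated on few agents --- and that is not a boundary case but the main one: in the matching lower bound (Proposition~\ref{prop_lower}) the extremal plays have $|X_r|=1$ in every round, so your left-hand side $\sum_r\binom{|X_r\cap H|}{k}$ is identically zero and the double counting says nothing. The invocation of Kolmogorov complexity to ``run the average-to-maximum conversion sharply'' is at this point a hope rather than an argument.

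The paper's proof does not bound co-dissatisfaction counts at all. It reduces to a \emph{tuple}-conflict version, uses determinacy to extract an optimal adversary strategy made \emph{tuple-injective} via dummy agents, and has the Decision Maker select, in round $r$, an option $\theta$ for which the parallelepiped of conflicting tuples $\Pi^r=S^r_1\times\cdots\times S^r_k$ satisfies $\KS(\Pi^r\cnd x)<\ell$ for every $x\in S^r_\theta$; each agent's dissatisfaction is then at most $2^{\ell}$ by counting simple-given-$x$ parallelepipeds. The heart of the matter is that when no such option exists, one gets $k$ witnesses $x_1,\dots,x_k$ with $\KS(\Pi^r\cnd x_j)\ge\ell$, and the non-trivial information inequality of Proposition~\ref{rom_gen} (a generalization of Romashchenko--Zimand, proved by the typization trick) converts this into $k\ell\le(k-1)\KS(\Pi^r)+i(\Pi^r)+O(\log(\cdot))$, with $\KS(\Pi^r)\lesssim\log T$ and $i(\Pi^r)\lesssim\log C$ coming from the round index and the $C$-conflict condition respectively. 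That inequality is precisely the tool that succeeds where your double counting fails, and nothing in your proposal supplies it or a substitute for it.
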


Indeed, for $k = O(1)$ and $C = o(T)$, ignoring the $poly(\ln N, \ln C)$-factor, this upper bound becomes $T^{1 - \frac{1}{k}} (o(T))^{\frac{1}{k}} = o(T)$. 

\begin{remark}
   One can get rid of the assumption that  $C$ and $T$ are known.  Namely, assume first that $T$ is known but $C$ is not. We start running the algorithm assuming $C = 1$. If the maximal dissatisfaction exceeds the upper bound for $C = 1$, we start again with $C = 2$. We continue in this way, increasing $C$ by a factor of $2$ with each reset. The total maximal dissatisfaction is now an exponential series, with the last term being equal to the whole sum, up to a constant factor. In the algorithm, we can never make our estimate of $C$ twice times bigger than the real one, meaning that the last term is bounded, up to a constant factor, by the same expression. In the same way, one can get rid of the knowledge of $T$.
\end{remark}
When $C$ is negligible compared to $T$, the dissatisfaction bound becomes of order $T^{1 - \frac{1}{k}}$. We show that this dependence on $T$ is tight, even for $C  = 1$, and with the number of agents $N$ growing as $T^{1/k}$ so that $\ln N$ is also negligible.

\begin{prop}
\label{prop_lower}
    For every $k$ and $M$, for $N = k \cdot M$ there exists no strategy that for the $1$-conflict perpetual voting with $k$ options, $N$ agents and $T = M^k$ rounds that guarantees dissatisfaction less than $M^{k-1}/k = T^{(k - 1)/k}/k$. 
\end{prop}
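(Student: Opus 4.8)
The plan is to exhibit a single \emph{oblivious} adversary (a fixed sequence of approval profiles, not depending on the Decision Maker's choices) for which the conflict number is at most $1$ and yet, in \emph{every} round, at least one agent is dissatisfied no matter which option the Decision Maker selects. This forces the total dissatisfaction $\sum_{i=1}^{N} D_i$ to be at least $T = M^k$ for any play of the game, so averaging over the $N = kM$ agents immediately gives $\max_i D_i \ge \tfrac{1}{N}\sum_i D_i \ge M^k/(kM) = M^{k-1}/k$, which is exactly the claimed bound. Since an oblivious adversary is a special case of an admissible adversary, this rules out \emph{every} Decision Maker strategy.

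To build the adversary I would first count the available patterns. The number of $k$-element subsets of the $N = kM$ agents is $\binom{kM}{k} \ge (kM/k)^k = M^k = T$, using the standard inequality $\binom{n}{k}\ge (n/k)^k$ (valid since $n = kM \ge k$). Hence there exist $T$ pairwise distinct $k$-subsets $Q_1, \ldots, Q_T \subseteq \{1,\ldots,N\}$. In round $r$, I would list $Q_r = \{x_1, \ldots, x_k\}$ and let agent $x_j$ approve every option except option $j$ (that is, $S_{x_j}^{(r)} = \{1,\ldots,k\}\setminus\{j\}$), while every agent outside $Q_r$ approves all $k$ options. Thus in each round there is exactly one agent disapproving each of the $k$ options, and these $k$ disapprovers are distinct.

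The two things to check are the conflict bound and the forced dissatisfaction. For the conflict bound, note that an option $\theta$ fails to satisfy a subset $S$ in round $r$ precisely when the unique disapprover of $\theta$, namely $x_\theta \in Q_r$, lies in $S$; therefore $S$ is in a conflict in round $r$ if and only if $Q_r \subseteq S$. For $|S| \le k = |Q_r|$ this forces $S = Q_r$, so a subset of size at most $k$ is in a conflict only in rounds $r$ with $Q_r = S$, and since the $Q_r$ are distinct, in at most one round. Hence the conflict number equals $1$. For the forced dissatisfaction, whatever option $\theta^{(r)}$ the Decision Maker plays, the agent $x_{\theta^{(r)}} \in Q_r$ disapproves it and is therefore dissatisfied, so each round contributes at least one unit to $\sum_i D_i$, giving $\sum_i D_i \ge T$ as required.

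I do not expect a serious obstacle; the only place demanding care is the verification that the single‑disapprover‑per‑option design keeps small subsets out of conflict (the ``if and only if'' above). This is exactly what reconciles the two competing requirements: forcing a dissatisfaction every round means the \emph{full} agent set (of size $N > k$, hence not counted toward the conflict number) must be in a conflict, while every subset of size $\le k$ must be in a conflict at most once. The distinctness of the $Q_r$, made possible by the counting bound $\binom{kM}{k}\ge M^k$, is precisely what achieves this reconciliation.
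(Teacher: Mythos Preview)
Your argument is correct and follows the same high-level approach as the paper: an oblivious adversary in which, each round, exactly one agent disapproves each of the $k$ options, with the $k$-sets of disapprovers chosen to be pairwise distinct across rounds so that the conflict number is $1$. The paper's construction is actually a special case of yours---it partitions the agents into $k$ groups of size $M$ and uses precisely the $M^k$ transversals of these groups as the $Q_r$, with the group-$j$ member disapproving option $j$; you instead allow the $Q_r$ to be arbitrary distinct $k$-subsets, relying on $\binom{kM}{k}\ge M^k$ to have enough of them.

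The more notable difference is the final counting step. The paper argues by a two-stage pigeonhole: the most frequent option $\theta$ is chosen in at least $M^k/k$ rounds, and then among the $M$ agents in group $\theta$ some one appears as the disapprover of $\theta$ in at least $M^{k-1}/k$ of those rounds. You instead observe directly that $\sum_i D_i \ge T$ and average over the $N=kM$ agents. Your averaging argument is shorter and in fact applies verbatim to the paper's construction as well; the paper's structured grouping is not needed once one notices the averaging step.
\qed
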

\begin{proof}
    Consider a strategy of the Adversary where it divides $N = k M$  agents into $k$ equal groups of size $M$. In each round, for every $i = 1, \ldots, k$, in the $i$-th group there will be one agent, approving everything except the $i$-th option, and all the other agents of the group approve everything. There will be $T = M^k$ rounds, corresponding to the number of ways to choose one agent per group.

    To be in a conflict, a subset $A$ with at most $k$ agents has to have one agent from every group (if there is nobody from the $i$-th group, the $i$-th option will satisfy everybody in $S$). That is, there are exactly $T^k$ subsets $S$ that can be in a conflict, corresponding to all ways to choose one agent per group. For such $S$ to be in a conflict in the $r$-th round, for every $i = 1, \ldots, k$, the agent of the $i$-th group from $S$ has to disapprove the $i$-th option in the $r$-th round. By construction, for every $S$ there exists only  one $r$ with this property. Thus, the Adversary fulfills the 1-conflict condition.
   % is parameterized by an index of a ``representative'' agent for each groups. Representative of the $i$-th group approves all options except the $i$-th one. All the remaining agents approve all options in this round.

    %Every $k$-tuple of agents can be in conflict only if the first agent in this tuple is from the first group, the second agent is from the second group, and so on. Moreover, this can only happen to this tuple in a single round indexed by this tuple. This means that this strategy of the Adversary satisfies the $1$-conflict restriction.

    We now show that regardless of the choices of the Decision Maker, there will be an agent, dissatisfied at least $M^{k - 1}/k$ times.
    Let $\theta\in\{1, \ldots, k\}$ be the most frequent option, chosen by the Decision Maker. It appears in at least $M^k/k$ rounds. On the other hand, every round involves exactly one agent from the $\theta$-th group, with $M^{k-1}$ rounds for each of $M$ agents of this group (corresponding to $M^{k-1}$ choices of agents from other groups). In one of this $M^{k-1}$-size groups of rounds, in at least $(1/k)$-fraction of rounds the option $\theta$ was elected, because at least this fraction of rounds in total has $\theta$. The agent of $\theta$-th group that appears in this group of rounds will therefore be dissatisfied at least $M^{k-1}/k$ rounds, as required.
\end{proof}
Our proof of Theorem \ref{thm_kolmogorov} uses the Kolmogorov complexity technique which does not yield an efficient strategy achieving this bound. Given $k, N, T, C$, this strategy can be found by a brute-force algorithm, solving the game by analysing all its positions but it requires exponential time and space.

We also give a bound, which has worse dependence on $T$, but is attained by an explicit voting rule,  inspired by the Exponential Weights Algorithm of~\cite{vovk1990aggregating, littlestone1994weighted}. In this rule, every agent gets a weight that is multiplied by a fixed factor each time this agent is dissatisfied, and the rule is to choose an option that minimally increases the sum of the weights.

\begin{theorem}
\label{thm_combi}
    For any $k, N, T, C$, there is a strategy of the Decision Maker, guaranteeing that all agents are dissatisfied at most:
    \[O\left(T^{1 - \frac{1}{k+1}} \cdot (C\cdot k \cdot \ln N)^{\frac{1}{k+1}}\right)\] times in the $C$-conflict perpetual voting with $k$ options, $N$ agents and $T$ rounds.
    \end{theorem}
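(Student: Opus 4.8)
The plan is to analyse the Exponential Weights rule directly through a potential argument, and to reduce everything to bounding the cumulative weighted dissatisfaction. Concretely, I set $\beta = 1+\eta$ for a learning rate $\eta>0$ to be chosen, give agent $i$ the weight $w_i^{(r)} = \beta^{d_i^{(r)}}$ where $d_i^{(r)}$ is its dissatisfaction before round $r$, and let the Decision Maker pick the option that minimizes the increase of $\Phi_r = \sum_i w_i^{(r)}$, i.e.\ the option of minimal dissatisfied weight. Writing $\sigma_r$ for the fraction of the total weight that is satisfied in round $r$, one has the exact recursion $\Phi_{r+1} = \Phi_r\bigl(1+\eta(1-\sigma_r)\bigr)$, so $\ln \Phi_{T+1} \le \ln N + \eta\sum_{r=1}^T (1-\sigma_r)$. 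Since the worst agent contributes $\beta^{\max_i D_i}\le \Phi_{T+1}$, this already gives
\[
\max_i D_i \;\le\; \frac{\ln N + \eta\sum_{r=1}^T(1-\sigma_r)}{\ln(1+\eta)}.
\]
Thus the whole problem reduces to bounding the cumulative dissatisfied mass $X := \sum_r (1-\sigma_r)$ and then optimizing $\eta$.

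The next step is a structural lemma controlling the per-round dissatisfied mass via conflicts. The key fact is a Helly-type property of the $k$-element universe: if every subfamily of size $\le k$ of the approval sets has a common option, then all of them do (for each option pick one approval set avoiding it; these $\le k$ sets already have empty intersection). Feeding the agents into the rule in order of decreasing weight, I would show that each dissatisfied agent is the \emph{least-weight} member of some conflict set of size $\le k$ whose other members are at least as heavy. Because all weights, hence all masses $p_j^{(r)} = w_j^{(r)}/\Phi_r$, are at most $1$, the minimum of a conflict set is bounded by the geometric mean of the set, giving the per-round estimate $1-\sigma_r \le \sum_A \bigl(\prod_{j\in A} p_j^{(r)}\bigr)^{1/k}$, where $A$ ranges over the $\le k$-element conflict sets associated with the dissatisfied agents of that round.

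I would then sum this over rounds and charge each term to its conflict set, using that any fixed set of size $\le k$ is in conflict at most $C$ times. The plan is to combine this $C$-budget with the fact that few agents can carry large mass (at most $1/\tau$ agents have mass $\ge\tau$, since $\sum_i p_i^{(r)}=1$) and with a Hölder/convexity step, to obtain a bound on $X$ of order $C^{1/k}T^{1-1/k}$ up to a factor depending on $\eta$ and $\ln N$. Plugging this into the displayed inequality and optimizing $\eta$ — trading the regret term $\ln N/\ln(1+\eta)$ against the accumulation term, which grows with $\eta$ — is precisely what converts the exponent $1-\tfrac1k$ into $1-\tfrac1{k+1}$ and leaves only $\ln N$ rather than a power of $N$; the factor $(C\,k\,\ln N)^{1/(k+1)}$ should emerge from this balance.

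The hard part will be the summation in the third step. The difficulty is twofold: a single round can involve many overlapping conflict sets, and the set of ``heavy'' agents changes from round to round, so a naive charge to all $\binom{N}{\le k}$ possible conflict sets loses a polynomial factor in $N$. The crux is therefore to split the dissatisfied mass into a \emph{diffuse} part, spread over many light agents, which the exponential weighting absorbs at a cost of only $\ln N$, and a \emph{concentrated} part, carried by the few heavy agents that are forced into genuine conflicts and hence limited by the budget $C$. Making this split along a weight threshold tied to the learning rate $\eta$, and simultaneously balancing the three quantities $T$, $C$ and $\ln N$, is where the argument has to be done carefully, and it is the step I expect to require the most work; the appearance of the extra $+1$ in the exponent (compared with the Kolmogorov-complexity bound of \Cref{thm_kolmogorov}) is the signature of this learning-rate trade-off.
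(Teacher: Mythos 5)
Your overall architecture matches the paper's: the same exponential-weights rule, the same potential recursion giving $\max_i D_i \le (\ln N + \eta\sum_r(1-\sigma_r))/\ln(1+\eta)$, and the same final step of optimizing $\eta$ to trade the $\ln N/\eta$ term against the accumulation term. But the core of the proof --- bounding $X=\sum_r(1-\sigma_r)$ --- is exactly where your plan has a genuine gap, and the structural lemma you propose is false. You claim each dissatisfied agent is the \emph{least-weight} member of some conflict set of size $\le k$ whose other members are at least as heavy. Counterexample with $k=2$: one heavy agent of weight $10$ disapproves only option 1, while ten light agents of weight $1$ each disapprove only option 2 (weights $10$ vs.\ $10$, or perturb slightly so option 1 wins); the rule picks option 1, the heavy agent is dissatisfied, yet every conflict pair containing it pairs it with a strictly lighter agent, so $p_i > \bigl(\prod_{j\in A}p_j\bigr)^{1/2}$ for every conflict set $A\ni i$. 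Your per-round inequality can be rescued by summing over \emph{all} conflict sets rather than one per agent, but then the charging against the budget $C$ --- which you yourself flag as ``the hard part'' --- involves sums of $k$-th roots of products, which the $C$-conflict condition does not directly control; the proposed heavy/light split is a hope, not an argument.

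The paper's route through this step is different and avoids the per-agent decomposition entirely. It bounds the minimum over options, $\delta_r=\min_\theta P_r(A_{r,\theta})$, by the geometric mean $(\delta_{r,1}\cdots\delta_{r,k})^{1/k}$, applies the power-mean inequality once across all $T$ rounds to get $\sum_r\delta_r\le T^{1-1/k}\bigl(\sum_r\delta_{r,1}\cdots\delta_{r,k}\bigr)^{1/k}$, and observes that $\delta_{r,1}\cdots\delta_{r,k}=P_r^{\otimes k}(A_{r,1}\times\cdots\times A_{r,k})$ is the product-measure probability of the set of conflicting $k$-tuples of that round --- note this is the $k$-th root of a \emph{sum of products}, not your larger sum of $k$-th roots. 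The $C$-conflict condition says each tuple lies in at most $C$ of these Cartesian products, and --- the ingredient entirely missing from your plan --- the distribution $P_r$ drifts by only a $(1+\eta)$ factor per round, so on windows of length $\Theta(1/(k\eta))$ the distributions are constant up to a factor $e$ and the charging is valid, giving $\sum_r P_r^{\otimes k}(\mathcal{C}_r)=O(kCT\eta)$. That $\eta$-dependence is precisely what makes the final optimization land on the exponent $1-\frac{1}{k+1}$. I would discard the per-agent lemma and the heavy/light split and redo the middle of the argument along these lines.
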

   Additionally, compared to Theorem \ref{thm_kolmogorov}, this bound does not have an additional polynomial dependence on $\ln C$, and has an explicit small exponent for $\ln N$.

We conclude the paper by analyzing some simpler voting rules and showing that they cannot lead to an $o(T)$ dissatisfaction, even for $k = 2$, $C = 1$, and $N= O(T)$, when $\ln N$ is much smaller than the number of rounds. First, we demonstrate this for the simple majority vote, called \emph{Approval Vote} in \cite{lackner2020perpetual}, where in each round an option with the most approvals is chosen, regardless of the previous history. Second, we show this for the rule called \emph{Perpetual Equality} in \cite{lackner2020perpetual}, which is the majority vote but over agents with maximal dissatisfaction (it can also be seen as the Exponential Weights  Algorithm with a very large factor). In fact, we show this for any \emph{compassionate strategy} of the Decision Maker, which means the following property---if there is a single agent with maximal dissatisfaction approving at least one option, the strategy makes this agent satisfied (i.e., chooses an option approved by them).

\begin{theorem}
\label{thm_lower}
For any $T$, the Approval Vote cannot guarantee dissatisfaction less than $T$  in the 1-conflict perpetual voting with 2 options,  $N = 2T + 1$ agents, and $T$ rounds.

Likewise, for any $T$, no compassionate strategy (including Perpetual Equality) can guarantee dissatisfaction less than $\lfloor T/2\rfloor$ in the 1-conflict perpetual voting with 2 options,  $N = T$ agents, and $T$ rounds.
\end{theorem}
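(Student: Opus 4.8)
For the Approval Vote I would exhibit a single non-adaptive Adversary play on the $N=2T+1$ agents that keeps one fixed agent dissatisfied in all $T$ rounds. Designate a \emph{victim} $v$ who approves $\{2\}$ in every round, so that $v$ is dissatisfied exactly when option $1$ is chosen; the plan is to force Approval Vote to pick option $1$ every round while respecting the $1$-conflict condition. Since $v$ always puts one approval on option $2$, it suffices to make option $1$ strictly more approved: starting from a configuration where all $2T$ non-victim agents approve $\{1,2\}$, switching two of them to approve only $\{1\}$ lowers option $2$'s count by two and makes option $1$ win by one. So in round $r$ I would use two \emph{fresh} non-victim agents (not used before) approving $\{1\}$ and let all the remaining non-victim agents approve $\{1,2\}$; this consumes $2T$ distinct agents over the $T$ rounds, exactly the available budget, and yields $D_v=T$. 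I would then verify $1$-conflict: no agent ever approves the empty set, so no singleton is ever in conflict; among non-victim agents every pair shares option $1$, so those pairs never conflict; and the pair $\{v,g\}$ is in conflict only in the single round in which $g$ approves $\{1\}$. Hence the conflict number is $1$.

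\textbf{Part 2 (compassionate strategies).} Here I would build an \emph{adaptive} Adversary producing a ``staircase'' of dissatisfaction with $N=T$ agents, raising the maximum dissatisfaction by one every two rounds and thus reaching $\lfloor T/2\rfloor$ within $T$ rounds. The construction rests on two mechanisms peculiar to $k=2$. First, if two agents sit at the current maximum level and the Adversary sets their approval sets to $\{1\}$ and $\{2\}$, then whatever the Decision Maker picks one of them is dissatisfied, so the maximum strictly increases and is held by a \emph{unique} agent $\ell$; call this a \emph{creation} step (it is forced even against a Decision Maker trying to minimise the maximum). Second, once $\ell$ is the unique maximum and approves a single option, the compassionate property \emph{forces} the Decision Maker to choose that option; letting every agent one level below approve only the opposite option then dissatisfies all of them simultaneously, lifting an entire group up by one level---a \emph{lift} step. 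Alternating a creation step and a lift step raises the top level by one and rebuilds a pool of co-maximal agents there, at a cost of two rounds per level.

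The whole difficulty is the $1$-conflict bookkeeping, so I would maintain the invariant that every agent in the current advancing pool has never served as a leader and never been a creation partner. Then two pool agents have only ever conflicted with past leaders, never with each other, so each creation step uses a fresh pair; and the freshly created leader $\ell$, being a former pool member, has never been paired with any current pool agent except its own creation partner, so every pair used in the following lift step is fresh as well. Each stage permanently removes exactly two agents from the advancing pool (the new leader, which I keep out of the pool to avoid its now-used pairs, and its creation partner, which can no longer be lifted by that leader), so the pool at level $d$ has size $N-2d$ and a leader at level $D$ can be created precisely while $N-2d\ge 2$, i.e. up to $D=N/2=T/2$. No agent ever approves the empty set, so the conflict number is $1$, and in at most $2\lfloor T/2\rfloor-1\le T$ rounds some agent reaches dissatisfaction $\lfloor T/2\rfloor$ against \emph{every} compassionate Decision Maker, with Perpetual Equality being the special case where the forced choice is realised by majority over the top dissatisfaction class.

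I expect the main obstacle to be exactly this pair-disjointness accounting: the parameters are tight, since the bound $\lfloor T/2\rfloor$ simultaneously exhausts the $T$ rounds at two rounds per level and the $N=T$ agents at two lost agents per level, so the ``never-leader, never-partner'' invariant on the pool must be maintained precisely and the adaptivity---relabelling which of the two conflicting agents the Decision Maker's (forced but a priori unknown) tie-break turned into the leader---must be threaded carefully through the induction.
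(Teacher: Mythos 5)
Your proof is correct and takes essentially the same approach as the paper's: the identical one-victim/two-fresh-helpers construction for Approval Vote (with the two options relabelled), and the identical two-round staircase for compassionate strategies (a creation round producing a unique maximally dissatisfied leader, then a forced lift of the remaining pool), with the same accounting that retires two agents per level and the same pair-freshness argument for the $1$-conflict condition.
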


Next three section contain proofs of Theorems \ref{thm_combi},  \ref{thm_kolmogorov}, and \ref{thm_lower}, respectively. 

\section{Proof of Theorem \ref{thm_combi}}
\label{sec_lw}
Our strategy is as follows. Fixing \[\varepsilon  = \left(\frac{\ln N}{T}\right)^{1 - \frac{1}{k+1}} \cdot \left(\frac{1}{Ck}\right)^{\frac{1}{k+1}},\]
the strategy works by assigning a weight to every agent, initially 1 for everybody, that is multiplied by $(1+\varepsilon)$ each time an agent is dissatisfied. The strategy chooses the option that minimally increases the sum of the weights, breaking ties arbitrarily.

We assume that $C\cdot k \cdot \ln N \le T$ because otherwise the stated upper bound on the dissatisfaction is worse than the trivial upper bound of $T$. Hence, $\frac{\ln N}{T}\le \frac{1}{Ck}$, meaning that $\varepsilon \le \frac{1}{Ck}$.

Let $P_r$ be the probability distribution on the agents where the probability of the $i$-th agent is proportional to its weight before the $r$-th round. For instance, $P_1$ is the uniform distribution on agents as all initial weights are the same. Next, let $A_{r,\theta}$ be the set of agents disapproving the option $\theta\in\{1, \ldots, k\}$ in the $r$-th round. Finally, we denote $\delta_{r,\theta} = P_r(A_{r,\theta})$.

If in the $r$-th round the  Exponential Weights Algorithm chooses an option $\theta\in\{1, \ldots, k\}$, then agents from $A_{r,\theta}$ multiply their weights by $(1 + \varepsilon)$. In other words, we add the $\varepsilon$-fraction of the weights of the agents of $A_{r,\theta}$ to the total sum of weights. 
% Dividing the new sum by the total sum of weights before the $r$-th rounds, we see that
This increases the sum of weights by the factor of  
 $(1 + \varepsilon \cdot P_r(A_{r,\theta})) =  (1 + \varepsilon \delta_{r,\theta})$. Hence, the  Exponential Weights Algorithm chooses an option that achieves the minimum:
\[\delta_r = \min\{\delta_{r,1}, \ldots, \delta_{r,k}\},\]
and the total sum of weights gets multiplied by exactly $(1 + \varepsilon \delta_r)$. Therefore, in the end,  the sum of weights will be exactly $N \cdot (1  +  \varepsilon \delta_1)\cdot \ldots \cdot (1  +  \varepsilon \delta_T)$ as the initial sum is $N$. This sum trivially lower bounds the weights of any individual agent, from where we get a bound:
\[(1 + \varepsilon)^{D_i}\le N \cdot (1  +  \varepsilon \delta_1)\cdot \ldots \cdot (1  +  \varepsilon \delta_T),\]
and, after taking the logarithm:
\begin{equation}
\label{eq_D}
D_i \le \frac{\ln N + \sum\limits_{r = 1}^T \ln(1 + \varepsilon\delta_r)}{\ln(1 + \varepsilon)}    
\end{equation}

Since $\varepsilon \le \frac{1}{Ck}\le 1$, we note $\ln(1  +\varepsilon)$ and $\varepsilon$ differ by at most some constant factor, meaning that
\[D_i = O\left(\frac{\ln N}{\varepsilon} + \sum\limits_{r = 1}^T \delta_r.\right)\]

How can we estimate the sum $\sum\delta_r$? We start by bounding the arithmetic mean of $\delta_1, \ldots, \delta_T$ by their $k$-mean:
\[\frac{\sum\limits_{r =1}^T\delta_r}{T} \le \left(\frac{\sum\limits_{r =1}^T\delta_r^k}{T}\right)^{1/k}.\]
We then bound  $\delta_r^k$  by the product $\delta_{r,1}\cdot\ldots \cdot \delta_{r,k}$ as $\delta_r$ by definition is the minimum of the factors in this product, getting:
\begin{equation}
    \label{eq_prod}
    \sum\limits_{r =1}^T\delta_r \le T^{1 - 1/k} \cdot \left(\sum\limits_{r =1}^T\delta_{r,1}\cdot\ldots \cdot \delta_{r,k}\right)^{1/k}.
\end{equation}
    The product $\delta_{r,1}\cdot\ldots \cdot \delta_{r,k}$ is equal by definition  to the product of probabilities
    \[P_r(A_{r,1})\cdot\ldots\cdot P_r(A_{r,k}),\]
    which is also the probability of the Cartesian product $\mathcal{C}_r = A_{r,1}\times \ldots \times A_{r, k}$ w.r.t. the probability distribution on the set of $k$-tuples of agents, obtained by choosing each agent in the tuple independently from $P_r$. We denote this distribution on $k$-tuples by $P_r^{\otimes k}$. This allows us to rewrite \eqref{eq_prod} as 
\begin{equation}
    \label{eq_prod2}
    \sum\limits_{r =1}^T\delta_r \le T^{1 - 1/k} \cdot \left(\sum\limits_{r =1}^T P_r^{\otimes k}(\mathcal{C}_r)\right)^{1/k}.
\end{equation}
Consider any $k$-tuple of agents $(a_1, \ldots, a_k)\in \mathcal{C}_r$. By definition, agents $a_1, \ldots, a_k$ disapprove the 1st, ..., the $k$-th option, respectively, in the $r$-th round. This means that the set $\{a_1, \ldots, a_k\}$ is in the conflict in the $r$-th round.
%The Cartesian product $\mathcal{C}_r$ is exactly the set of $k$-tuples that are in the conflict in the $r$th round. 
Hence, due to the $C$-conflict condition, every $k$-tuple belongs to at most $C$ sets among $\mathcal{C}_1, \ldots, \mathcal{C}_T$. If probability distributions $P_r^{\otimes k}$ were all the same for different $r$, the sum of probabilities $\sum_{r =1}^T P_r^{\otimes k}(\mathcal{C}_r)$ would be bounded by $C$ as every probability of an individual $k$-tuple would appear in this sum at most $C$ times. However, these probability distributions can be different, and to obtain the desired bound, we will use the fact that they change just a little from one round to another.

Namely, the $P_r$-probability and the $P_{r+1}$-probability of any agent differ by at most the $(1+\varepsilon)$-factor. Indeed, at any round, the probability of an agent is computed as its current weights divided by the sum of all weights. In one round, both the numerator and the denominator do not decrease but can increase by at most the $(1+\varepsilon)$-factor. Hence, the fraction can increase by at most the $(1+\varepsilon)$-factor and decrease by at most the same factor.

This means that the $P_r^{\otimes k}$-probability  and the $P_{r+1}^{\otimes k}$-probability of any individual $k$-tuple of agents differ by at most the factor of $\alpha =(1+\varepsilon)^k$. Hence, on an interval of  $l = \lceil\ln_\alpha e\rceil$ 
 rounds, these probabilities can change by the factor at most $\alpha^{l -1}\le e$. We claim that the sum of $P_r(\mathcal{C}_r)$ within any such interval is upper bounded by $e \cdot C$. Indeed, for any $r_0$, the sum $P_{r_0}(\mathcal{C}_{r_0})  + \ldots + P_{r_0 +\ell - 1}(\mathcal{C}_{r_0+\ell - 1})$ is bounded by $e\cdot (P_{r_0}(\mathcal{C}_{r_0})  + \ldots + P_{r_0 }(\mathcal{C}_{r_0+\ell - 1}))$. In turn, the sum $P_{r_0}(\mathcal{C}_{r_0})  + \ldots + P_{r_0 }(\mathcal{C}_{r_0+\ell - 1})$ is bounded by $C$ because any $k$-tuple belongs to at most $C$ sets $\mathcal{C}_{r_0}, \ldots, \mathcal{C}_{r_0+\ell - 1}$.

Splitting all $T$ rounds in $O(T/l)$ intervals of length at most $l$, we get a bound
 \begin{align*}
 \sum\limits_{r =1}^T P_r^{\otimes k}(\mathcal{C}_r) &=  O\left(\frac{CT}{l}\right) =  O\left(\frac{CT}{\ln_\alpha(e)}\right)\\
 &=   O\left(CT\ln(\alpha)\right) = O(k\cdot C \cdot T \cdot \varepsilon).
 \end{align*}
Combining this bound with \eqref{eq_D}
 and \eqref{eq_prod2}, we finally get our upper bound on the dissatisfaction:
 \[D_i = O\left(\frac{\ln N}{\varepsilon} 
 + T (kC\varepsilon)^{1/k}, \right)\]
 which for our choice of $\varepsilon = \ \left(\frac{\ln N}{T}\right)^{1 - \frac{1}{k+1}} \cdot \left(\frac{1}{Ck}\right)^{\frac{1}{k+1}}$ (taken, of course, to make both terms to be equal to each other), transforms into the desired upper bound:
 \[D_i = 
 O\left(T^{1 - \frac{1}{k+1}} \cdot (C\cdot k \cdot \ln N)^{\frac{1}{k+1}}\right).\]

 \section{Proof of Theorem \ref{thm_kolmogorov}}
\label{sec_kolmogorov}
We start by introducing Kolmogorov complexity~\cite{shen2022kolmogorov}. For any partially computable $D\colon\{0, 1\}^*\times\{0, 1\}^*\to\{0,1\}^*$ we define the conditional Kolmogorov complexity $\KS_D(x|y)$ of $x$ given $y$ for two binary strings $x,y\in\{0,1\}^*$ w.r.t.~$D$ as 
\[\KS_D(x\cnd y) = \min\{|p| \colon p\in\{0, 1\}^* \text{ s.t. } D(p, y) = x\},\]
where $|p|$ denotes the length of a binary string $p$. In other words, we look for the shortest description $p$ of $x$ assuming  $y$ is known, where $D$ is used as a \emph{decompressor} that is supposed to produce $x$ from its ``compressed`'' description $p$ assuming the knowledge of $y$.

Different $D$ lead to different ``Kolmogorov complexities'', but there exists an ``optimal'' decompressor $D_{opt}$ for which the resulting complexity function is minimal over all $D$ with the $O(1)$-precision. 
More precisely~\cite[Theorem 17]{shen2022kolmogorov}, there exists $D_{opt}$ such that for any other $D$ there exists a constant $C$ such that:
\[\KS_{D_{opt}}(x\cnd y)\le \KS_{D}(x\cnd y) + C\]
for all strings $x, y\in\{0, 1\}^*$. We fix any optimal decompressor $D_{opt}$ and define the conditional Kolmogorov complexity of $x$ given $y$ as $\KS(x|y) = \KS_{D_{opt}}(x\cnd y)$. 
We also define the unconditional Kolmogorov complexity $\KS(x)$ of a string $x$ as $\KS(x\cnd \Lambda)$, where $\Lambda$ is the empty string.

We will use the observation that for any $y$, the number of strings with $\KS(x\cnd y) < k$ is less than $2^k$. This is because there are less than $2^k$ descriptions of length less than $k$.

We can extend the notion of Kolmogorov complexity from binary strings to any finite objects, like tuples of strings, sets of strings, natural numbers, and so on. It only takes to fix a computable bijection with the set of finite objects of the type we are interested in and the set of binary strings. Then, in place of these finite objects we use their images under this bijection when working with their Kolmogorov complexities. It can be observed that different computable bijections lead to complexity functions that differ only by $O(1)$-term. 

If $x$ and $y$ are two finite objects, we can then define the complexity of their ordered pair $\langle x, y\rangle$. To simplify the notation, we will simply write $\KS(x, y)$ in place of  $\KS(\langle x, y\rangle)$. Likewise, we will use the comma-separated notation for the complexity of tuples of finite objects.

We will use an equality known as the \emph{chain rule}~\cite[Theorem 21]{shen2022kolmogorov} which states that for any two binary strings $x, y$, up to an $O(\log \max\{\KS(x), \KS(y)\})$-term, we have:
\[\KS(x, y) = \KS(x) + \KS(y\cnd x) = \KS(y) + \KS(x\cnd y). \]

 We define the notion of the \emph{mutual information} between two binary strings $x$ and $y$:
\[I(x:y) = \KS(x) + \KS(y) - \KS(x, y).\]
Due to the chain rule, up to an $O(\log \max\{\KS(x), \KS(y)\})$-term, the mutual information can also be written as:
\[I(x:y) = \KS(x) - \KS(x\cnd y) =  \KS(y) - \KS(y\cnd x).\]

For a finite set $A$ we introduce its ``irregularity parameter'' $i(A) = \max\{\KS(A\cnd x) : x\in A)\}$ as the maximal complexity of $A$ given its element.

We need the following inequality. For $k = 2$, it was established in a weaker form by Romashchenko and Zimand~\cite{romashchenko2019operational}.
\begin{prop}
\label{rom_gen}
    For any $k$ there exists $\Gamma > 0$ such that for any $n$, for any  set $\Pi$ of the form $\Pi= 
 U_1\times \ldots \times U_k$ for $U_1, \ldots, U_k \subseteq\{0, 1\}^n$, and for every tuple $(x_1, \ldots, x_k)\in \Pi$, we have:
\begin{align*}
\KS(\Pi\cnd x_1) +& \ldots + \KS(\Pi\cnd x_k)\\ &\le (k - 1) \KS(\Pi) + i(\Pi) + \Gamma\cdot(\log (i(\Pi) + n))). 
\end{align*}
\end{prop}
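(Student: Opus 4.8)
The plan is to translate the statement into a single mutual-information inequality and then reduce it to a purely combinatorial fact about the product $\Pi$. First I would rewrite each term by symmetry of information: applying the chain rule twice gives $\KS(\Pi\cnd x_i) = \KS(\Pi) - I(x_i:\Pi) + O(\log)$, where $I(x_i:\Pi) = \KS(x_i) - \KS(x_i\cnd\Pi)$ and the error is $O(\log\KS(\Pi))$. Summing over $i$, the desired bound becomes equivalent to
\[\sum_{i=1}^k I(x_i:\Pi) \ge \KS(\Pi) - i(\Pi) - O(\log).\]
To see that the error term has the claimed shape, note that for any $\vec y\in\Pi$ one has $\KS(\Pi)\le\KS(\vec y)+\KS(\Pi\cnd\vec y)+O(\log)\le kn + i(\Pi) + O(\log)$, since $\vec y$ is a $k$-tuple of $n$-bit strings; hence $\log\KS(\Pi) = O(\log(i(\Pi)+n))$ for fixed $k$, and every chain-rule error gets absorbed into $\Gamma\log(i(\Pi)+n)$.

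The key observation is that $i(\Pi) = \max_{\vec y\in\Pi}\KS(\Pi\cnd\vec y)$ is a \emph{maximum}, so it suffices to exhibit a single tuple $\vec y\in\Pi$ with $\KS(\Pi\cnd\vec y)\ge\KS(\Pi)-\sum_i I(x_i:\Pi)-O(\log)$, that is, with $I(\vec y:\Pi)\le\sum_i I(x_i:\Pi)+O(\log)$. One cannot take $\vec y=\vec x$: the ``specific-tuple'' strengthening in which $i(\Pi)$ is replaced by $\KS(\Pi\cnd x_1,\dots,x_k)$ is \emph{false} (for instance, if $U_1=U_2$ is a complex set and $x_2$ is the successor of $x_1$ inside it, the two coordinates are correlated through $\Pi$, so $\vec x$ reveals far more about $\Pi$ than its coordinates do individually). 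Ruling out such worst-case correlations is exactly where the product structure must be used: for a non-product set, such as a shifted diagonal, the analogous inequality fails outright.

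Concretely, I would build the reference tuple coordinate by coordinate, letting $y_i$ be a least-informative element of $U_i$, i.e.\ one maximizing $\KS(\Pi\cnd y)$ over $y\in U_i$, so that $\mu_i := I(y_i:\Pi) = \min_{y\in U_i} I(y:\Pi)\le I(x_i:\Pi)$. By the product structure $\vec y=(y_1,\dots,y_k)\in\Pi$, so the goal reduces to the $\vec x$-free statement
\[I(\vec y:\Pi)\le\sum_{i=1}^k\mu_i + O(\log(i(\Pi)+n)),\]
namely that the information the coordinates reveal about a product set is subadditive at its least-informative tuple. Each $y_i$ is pinned down, given $\Pi$, by $O(\log(i(\Pi)+n))$ bits of advice (the target value of $\KS(\Pi\cnd y_i)$, which is at most $\KS(\Pi)$) together with enumeration of the elements of $U_i$ of smaller conditional complexity; this is what produces the logarithmic slack.

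The hard part is this last inequality, i.e.\ ruling out informational synergy between the coordinates. Bounding $\KS(\vec y)\le\sum_i\KS(y_i)$ by subadditivity leaves a residual conditional total-correlation term $\sum_i\KS(y_i\cnd\Pi)-\KS(\vec y\cnd\Pi)$ that is genuinely large when the factors $U_i$ are correlated (e.g.\ equal), so a finer argument tied to the product structure is needed. For $k=2$ this is precisely the Romashchenko--Zimand inequality in a strengthened form, proved by a counting/fingerprinting argument whose engine is the observation that the condition $i(\Pi')\le m$ forces any fixed tuple to lie in at most $2^{m+1}$ product sets $\Pi'$ of complexity at most $\KS(\Pi)$ (since $\KS(\Pi'\cnd\vec y)\le m$ whenever $\vec y\in\Pi'$). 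For general $k$ I would induct on $k$ through the bipartition $\Pi = U_1\times(U_2\times\cdots\times U_k)$, applying the $k=2$ inequality at each step; the chief technical burden is controlling the cross-factor terms when $U_1$ is correlated with the remaining factors, and ensuring that the $O(\log(i(\Pi)+n))$ errors incurred at each of the $k$ steps accumulate only into the final $\Gamma\log(i(\Pi)+n)$ bound.
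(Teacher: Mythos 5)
Your reduction to the inequality $\KS(\Pi)-i(\Pi)\le\sum_i I(x_i:\Pi)+O(\log(i(\Pi)+n))$ is correct, and you have correctly located the crux: one must rule out informational synergy among the coordinates. But the proposal stops exactly there. The two ingredients you lean on to finish are both unavailable. First, the ``$k=2$ Romashchenko--Zimand inequality in a strengthened form'' is not something you can cite: the paper itself notes that Romashchenko and Zimand proved only a weaker form, so for $k=2$ the strengthening \emph{is} the proposition, and your proof of it is reduced to a one-line allusion to ``a counting/fingerprinting argument.'' Second, the induction through the bipartition $U_1\times(U_2\times\cdots\times U_k)$ cannot give the stated bound even granting the $k=2$ case: at each of the $k-1$ splits you must control the term $\KS(\Pi\cnd (x_{j+1},\ldots,x_k))$ either via the specific-tuple quantity --- the strengthening you yourself argue is false --- or via $i(\Pi)$ again, and the latter accumulates to $(k-1)\,i(\Pi)$ on the right-hand side instead of the single $i(\Pi)$ that the proposition (and the application, where $i(\Pi)\approx\log C$ must enter with exponent $1/k$) requires. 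Your choice of $y_i$ as a least-informative element of $U_i$ also does not address synergy: minimizing each $I(y_i:\Pi)$ separately says nothing about $I((y_1,\ldots,y_k):\Pi)$, and such a $y_i$ is in general not specified by $O(\log(i(\Pi)+n))$ bits given $\Pi$, since there may be exponentially many minimizers.

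The missing idea is Romashchenko's typization trick, which is how the paper proceeds. For each $j$ one forms the set $X_j$ of all $x'_j\in U_j$ whose complexity profile relative to $\Pi$ (namely $\KS(x'_j)$, $\KS(x'_j\cnd\Pi)$, $\KS(x'_j,\Pi)$, $\KS(\Pi\cnd x'_j)$) is dominated by that of $x_j$; one shows $\log_2|X_j|=\KS(x_j\cnd\Pi)+O(\log(i(\Pi)+n))$, and then a counting argument produces a tuple in $X_1\times\cdots\times X_k$ that simultaneously preserves all these profiles and has near-maximal complexity given $\Pi$. This forces $\KS(x'_1,\ldots,x'_k\cnd\Pi)=\KS(x'_1\cnd\Pi)+\ldots+\KS(x'_k\cnd\Pi)$ up to $O(\log(i(\Pi)+n))$, i.e.\ the replacement coordinates are conditionally independent given $\Pi$ --- precisely the no-synergy statement you need; the product structure is used only to guarantee $(x'_1,\ldots,x'_k)\in\Pi$ and hence $\KS(\Pi\cnd x'_1,\ldots,x'_k)\le i(\Pi)$. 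After that the inequality follows from the chain rule and subadditivity in a few lines, uniformly in $k$ and with no induction.
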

\begin{proof} We treat $k$ as a fixed constant so that all constants in the $O(\cdot)$-notation might depend on $k$.

We will work with complexities of some binary strings of length $n$ and with complexity of the ``combinatorial parallelepiped''  $\Pi = U_1  \times \ldots \times U_k$. How large can be these complexities? Complexity of any $n$-length binary string $x$ is bounded by $n + O(1)$. Complexity of $\Pi$ can be bounded by $i(\Pi) + O(n)$. This is because one can specify $\Pi$ by any $k$-tuple of $n$-bit binary strings, belonging to it, plus the optimal description of $\Pi$, given this tuple. The first part takes $O(n)$ bits, the second takes at most $i(\Pi)$ by definition of the irregularity parameter. Hence, whenever we use the chain rule, it holds with the $O(\log(i(\Pi) + n))$-precision.

We use the \emph{Romashchenko typization trick}~\cite{romphd} and switch from some object $s$ to a set of objects ``similar to $s$'' (that includes $s$ itself).  For example, a string $s$ of some complexity $m=\KS(s)$ is an element of the set of all strings that have complexity at most $m$. (We say ``at most $m$'' and not ``exactly $m$'' since we will need to enumerate those strings.)

For a pair of strings $(u,v)$ we might consider all pairs $(u',v')$ such that  $\KS(u')\le \KS(u)$, $\KS(v')\le \KS(v)$, $\KS(u',v')\le \KS(u,v)$, $\KS(u'\cnd v')\le \KS(u\cnd v)$,  $\KS(v'\cnd u')\le \KS(v\cnd u)$ (all complexities and conditional complexities of $u',v'$ are bounded by the corresponding complexities of $u$ and $v$; these pairs can be enumerated).

In our case, for every $j = 1, \ldots, k$, we consider objects similar to $x_j$ in the context of a given ``combinatorial parallelepiped'' $\Pi=U_1\times\ldots U_k$. Namely, we consider the set $X_j$ of all $x'_j\in U_j$ such that all the quantities
\begin{equation}\label{eq:2}
 \KS(x_j),  \KS(x_j\cnd\Pi), \KS(x_j,\Pi), \KS(\Pi\cnd x_j)
\end{equation}
do not increase when $x_j$ is replaced by $x'_j$.  Note that we consider only $x'_j\in U_j$ (belonging to the $j$-th projection of the parallelepiped), and use the entire parallelepiped (and not only its $j$-th projection $U_j$) in these expressions.

Obviously, the set $X_j$ contains $x_j$, so it is not empty. On the other hand, its log size is bounded by $\KS(x_j\cnd \Pi)$, since its elements have at most this complexity given $\Pi$. The crucial observation is that \emph{this bound is $O(\log(i(\Pi) +  n))$-tight}:
\begin{equation}
\label{eq_x}
    \log_2 |X_j| = \KS(x_j\cnd\Pi) + i(\Pi) + O(\log(i(\Pi) +  n)).
\end{equation}
Indeed, knowing $\Pi$ and  numerical parameters (complexities in \eqref{eq:2}), we can efficiently enumerate $X_j$ by running the optimal decompressor on all inputs, eventually finding all necessary Kolmogorov complexity upper bounds for all elements of $X_j$.   Thus,  knowing $\Pi$, one can specify $x_j$ itself by its index in this enumeration, which takes $\log_2 |X_j|$ bits, and by numerical parameters, which take $O(\log(i(\Pi) +  n))$ bits, giving us the inequality $ \KS(x_j\cnd\Pi) \le \log_2|X_j| + O(\log(i(\Pi) +  n))$, leading to \eqref{eq_x}.

Next, we claim the following: for any fixed $\delta < 1$, for at least $\delta$-fraction of $x_j'\in X_j$, we have that all complexities in \eqref{eq_x} are the same for $x_j'$ and $x_j$, up to an  $O(\log(i(\Pi) +  n))$-term, with the constant in the $O(\cdot)$-notation depending on $\delta$. It is enough to show this for the conditional complexities   $\KS(x^\prime_j\cnd\Pi)$ and $\KS(x_j\cnd\Pi)$.  Indeed, once we know that, we first can establish the equality (with the same precision) between the complexities of pairs $\KS(x_j, \Pi)$ and $\KS(x_j^\prime, \Pi)$ by writing
\begin{align*}
\KS(x'_j,\Pi)=&\KS(\Pi)+\KS(x'_j\cnd\Pi)\\
&=\KS(\Pi)+\KS(x_j\cnd\Pi)=\KS(x_j,\Pi).
\end{align*}
To establish the approximate equality between $\KS(x_j)$ and $\KS(x_j^\prime)$, and between $\KS(\Pi|x_j)$ and $\KS(x_j\cnd x_j^\prime$, we first notice that they sum up to (approximately) the  same value $\KS(x'_j,\Pi) = \KS(x_j,\Pi)$. On the other hand, in the sum with $x_j^\prime$ both terms do not exceed the corresponding terms in the sum for $x_j$, by definition of $X_j$. This means that the corresponding terms are actually approximately equal.

It remains to show the approximate equality between  $\KS(x^\prime_j\cnd\Pi)$ and $\KS(x_j\cnd\Pi)$.
We have $\KS(x^\prime_j\cnd\Pi)\le \KS(x_j\cnd\Pi)$ by definition for all $x^\prime_j\in X_j$. Now, take the $(1 - \delta)$-fraction of strings of $X_j$ with the lowest complexity given $\Pi$, and let $\ell$ be their maximal complexity so that at least the $\delta$-fraction of strings of $X_j$ have complexity at least $\ell$, leaving us with the task of lower bounding $\ell$.  In this  $\delta$-fraction  of strings there are at most $2^{\ell + 1}$ strings (as all of them have complexity less than $\ell + 1$ given $\Pi$), meaning that in all $X_j$ there are at most $(1/(1-\delta)) \cdot 2^{\ell+1} = O(2^\ell)$ strings. Knowing the bound  $\log_2 |X_j|= \KS(x_j\cnd\Pi) + O(\log(i(\Pi) +  n))$, we conclude that $\ell \ge\KS(x_j\cnd\Pi) + O(\log(i(\Pi) +  n))$.

For the rest of the argument, we  choose $\delta = 1 - \frac{0.01}{k}$. 
Next,  we sample a tuple $(x'_1,\ldots, x'_k)\in X_1\times\ldots\times X_k$ uniformly at random. With positive probability, we have that for $j = 1, \ldots, k$, all the quantities in \eqref{eq:2} are the same both for $x^\prime_j$ and $x_j$ with the $O(\log(i(\Pi) + n))$ precision, and that $\KS((x'_1,\ldots, x'_k)\cnd\Pi) \ge \log_2 |X_1\times\ldots \times X_k| - 10$. Indeed, for any $j = 1, \ldots, k$, the probability that some quantity is too small for $x^\prime_j$ in \eqref{eq:2} is at most $0.01/k$, meaning that probability that there exists a bad $j$ is at most $1\%$,
and the probability that $C(x^\prime_1, \ldots, x^\prime_k|\Pi)$ is too small is no more than $1\%$ just because there are too few tuples of low complexity.

We now fix an arbitrary 
$(x'_1,\ldots, x'_k)\in X_1\times\ldots\times X_k$ 
satisfying these properties. Now it suffices to prove the inequality of Proposition 2 for $x'_1, \ldots, x_k'$ in place of $x_1, \ldots, x_k$, because all the terms in this inequality change by at most $O(\log(i(\Pi) + n))$.

As for any tuple, belonging to $\Pi$, we have:
\begin{equation}\label{eq_irr}
    \KS(\Pi\cnd x_1', \ldots, x_k') \le i(\Pi).
\end{equation}
Let us from now on skip $O(\log(i(\Pi) + n))$-terms as all the inequalities we will use are true with this precision (and all complexities in question are bounded by $O(i(\Pi) + n)$). By our choice of $x_1', \ldots, x_k'$, and by \eqref{eq_x}, we have:
\begin{align*}
    \KS(x_1', \ldots, x_k'\cnd \Pi) &\ge \log_2(|X_1|) + \ldots \log_2(|X_k|)
\\    &= \KS(x_1|\Pi) + \ldots + \KS(x_k|\Pi)\\
&= \KS(x_1'|\Pi) + \ldots + \KS(x_k'|\Pi)
\end{align*}
Having also the inequality $\KS(x_1', \ldots, x_k'\cnd \Pi) \le \KS(x_1'|\Pi) + \ldots + \KS(x_k'|\Pi)$ (the complexity of a tuple is bounded by the sum of complexities of the strings in this tuple, with a precision logarithmic in the complexities of the strings, see Theorem 16 in \cite{shen2022kolmogorov}), we obtain the equality:
\begin{equation}
\label{eq_ind}
 \KS(x_1', \ldots, x_k'\cnd \Pi) =
\KS(x_1'|\Pi) + \ldots + \KS(x_k'|\Pi).
\end{equation}
The inequality that we are aiming to prove, after adding $\KS(\Pi)$ to both sides, looks like that:
\[\KS(\Pi|x_1') + \ldots +\KS(\Pi|x_k') + \KS(\Pi) \le k\cdot \KS(\Pi) + i(\Pi).\]
By \eqref{eq_irr}, it suffices to prove
$%\begin{align*}$
\KS(\Pi|x_1') + \ldots +\KS(\Pi|x_k') + \KS(\Pi)  \le k\cdot \KS(\Pi) + \KS(\Pi|x_1', \ldots x_k')$. %\end{align*}
The last inequality, by definition of the mutual information, is equivalent to:
\[\KS(\Pi) \le I(\Pi:x_1') + \ldots + I(\Pi:x_k') + \KS(\Pi|x_1', \ldots x_k').\]
Re-writing  each mutual information in the other way, we get:
\begin{align*}
    \KS(\Pi) &+ \KS(x_1'\cnd\Pi) + \ldots + \KS(x_k'|\Pi)\\
    &\le \KS(x_1') + \ldots + \KS(x_k') +\KS(\Pi|x_1', \ldots x_k'). 
\end{align*}
By \eqref{eq_ind}, it is equivalent to:
\begin{align*}
    \KS(\Pi) &+  \KS(x_1', \ldots, x_k'\cnd \Pi)\\
    &\le \KS(x_1') + \ldots + \KS(x_k') +\KS(\Pi|x_1', \ldots x_k')
\end{align*}
The left-hand side, by the chain rule, is equal to $\KS(x_1', \ldots, x_k', \Pi)$. Therefore, we have reduced everything to the following inequality:
\[\KS(x_1', \ldots, x_k', \Pi) \le \KS(x_1') + \ldots + \KS(x_k') +\KS(\Pi|x_1', \ldots x_k').\]
It holds because optimal descriptions of $x_1^\prime, \ldots, x_k^\prime$, followed by an optimal description of $\Pi$ given $x_1^\prime, \ldots, x_k^\prime$, with an $O(\log(i(\Pi) + n))$-precision to indicate lengths of these descriptions, can be turned into a description for the whole tuple 
$x_1', \ldots, x_k', \Pi$.
\end{proof}

We now derive Theorem \ref{thm_kolmogorov} from this inequality. For the proof, it will be convenient to extend the notion of a conflict  from subsets of agents to $k$-tuples of agents. Namely, we say that an ordered $k$-tuple $(a_1, \ldots, a_k)\in\{1, \ldots, N\}^k$ is in the conflict in the $r$-th round if for every $i\in\{1, \ldots, k\}$, the agent $a_i$ disapproves the $i$-th option in the $r$-th round. The \emph{tuple conflict number} of a play is the maximum, over all  $(a_1, \ldots, a_k)\in\{1, \ldots, N\}^k$, of the number of rounds the tuple $(a_1, \ldots, a_k)$ was in the conflict. By the \emph{tuple $C$-conflict} perpetual voting we mean a modification of the game where the Adversary has to keep the tuple conflict number of the  play at most $C$.
\begin{lemma}
\label{lemma_numbers}
    The tuple conflict number  of any play is upper bounded by the conflict number of the play.
\end{lemma}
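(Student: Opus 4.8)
The plan is to prove the bound pointwise, by comparing each $k$-tuple to the unordered set of its entries. First I would fix an arbitrary tuple $(a_1, \ldots, a_k) \in \{1, \ldots, N\}^k$ and introduce the associated set $A = \{a_1, \ldots, a_k\}$. Since $A$ is obtained by forgetting the order (and possible repetitions) of the entries, we always have $|A| \le k$, so $A$ is one of the sets over which the maximum in the definition of the conflict number is taken. The whole argument then reduces to showing that, round by round, a conflict of the tuple forces a conflict of the set $A$.

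The key step is the implication: if $(a_1, \ldots, a_k)$ is in a conflict in round $r$, then $A$ is in a conflict in round $r$. I would argue this directly. Take any option $\theta \in \{1, \ldots, k\}$; since the tuple has a well-defined $\theta$-th entry $a_\theta$, and $a_\theta \in A$, it suffices to note that by the definition of a tuple conflict, the agent $a_\theta$ disapproves the $\theta$-th option, i.e.\ $\theta \notin S_{a_\theta}^{(r)}$. Thus the option $\theta$ fails to satisfy the agent $a_\theta \in A$, so $\theta$ does not satisfy every agent of $A$. As $\theta$ was arbitrary, no option satisfies all of $A$, which is exactly the statement that $A$ is in a conflict in round $r$.

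It follows that the set of rounds in which the tuple $(a_1, \ldots, a_k)$ is in a conflict is contained in the set of rounds in which $A$ is in a conflict; hence the number of conflict rounds of the tuple is at most the number of conflict rounds of $A$, which in turn is at most the conflict number of the play because $|A| \le k$. Taking the maximum over all tuples $(a_1, \ldots, a_k)$ yields that the tuple conflict number is at most the conflict number, as claimed. I do not expect a genuine obstacle here; the only subtlety worth flagging is that the map $(a_1, \ldots, a_k) \mapsto \{a_1, \ldots, a_k\}$ need not be injective (entries may coincide), but this is harmless since the size bound $|A| \le k$ survives any collapse, and the implication we use runs in the correct direction—tuple conflict $\Rightarrow$ set conflict—precisely because the tuple definition quantifies over \emph{every} coordinate $i$.
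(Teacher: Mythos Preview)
Your argument is correct and is exactly the paper's approach: the paper's entire proof is the single observation that whenever a tuple $(a_1,\ldots,a_k)$ is in conflict in round $r$, the set $\{a_1,\ldots,a_k\}$ is also in conflict in that round. Your write-up simply spells out this implication and the ensuing maximum in more detail.
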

\begin{proof}
   At any round a tuple $(a_1, \ldots, a_k)$ is in the conflict, the set $\{a_1, \ldots, a_k\}$ is also in the conflict.
\end{proof}
Lemma \ref{lemma_numbers} implies that a strategy of decision maker, guarantying maximal dissatifaction at most $D$ in tuple $C$-conflict perpertual voting, also guarantees dissatifaction at most $D$ in the (subset) $C$-conflict perpetual voting. Hence, it is enough to establish  Theorem \ref{thm_kolmogorov} for the tuple $C$-conflict perpetual voting.

We treat $k$ as a fixed constant. Therefore, all constants in the $O(1)$-notation below might depend on $k$, but not on anything else. Next,  we observe that it is enough to show the theorem when $N$, $T$, and $C$ are powers of $2$. Indeed, to show the bound for arbitrary $N, T, C$, we use the strategy for the smallest powers of $2$, exceeding these numbers. This leads to some constant increase in the bound 
on the dissatisfaction that can be compensated by increasing $W$.

Now, for a given $n, t, c$, our goal is to derive an upper bound on $D_{n,t,c}$ which is the minimal $D$ such that there is a strategy of Decision Maker, guaranteeing dissatisfaction at most $D$ in the tuple $C = 2^c$-conflict perpetual voting with $k$ options, $N = 2^n$ agents, and $T = 2^t$ rounds.

We define an auxiliary algorithm $Alg(n, t, c)$ that on input $(n, t, c)$ works as follows. First, it computes $D= D_{n,t,c}$. It is doable because we simply have to solve a finite perfect-information game, completely given by $n, t, c$. Because of the determinacy of such games, there also exists a strategy of the Adversary proving the minimality of $D_{n,t,c}$, meaning that it guarantees that in any play there will be a $D_{n,t,c}$-dissatisfied agent in the perpetual voting with these parameters. The algorithm $Alg(n,t,c)$ finds this strategy of the Adversary.

Then the algorithm converts this strategy into a strategy of Adversary for the game with the same number of rounds $T$, with the same conflict bound $C$,  also guaranteeing  $D_{n,t,c}$-dissatisfaction, but with  $\widehat{N}$ agents, where $\widehat{N}$  is the smallest power of 2 which is at least $N + kT$. We increase the number of players because we want this strategy of the Adversary to be \emph{tuple injective}, by which we mean that it never has the same set of tuples in a conflict in two different rounds. This can be achieved by using additional $kT$ 'dummy' agents. Let us numerate these agents by $a_{r}^i$ for $r = 1, \ldots, T$, $i = 1, \ldots, k$. The agent $a_{r}^i$ disapproves the $i$-th option in the $r$-th round, and apart from that, this agent approves everything every time.
 This does not increase the tuple conflict number of any play. Indeed, take any tuple that includes a new agent $a^i_r$ on the $j$-th position. This tuple can be in the conflict only once, in the $i$-th round, and only if $i = j$.  Likewise, the strategy still guarantees $D_{n,t,c}$-dissatisfaction by for the initial agents. On the other hand, the $r$-th round is the only round in which the tuple $(a^1_r, \ldots, a^k_r)$ is in the conflict, which implies tuple injectivity.

The algorithm takes the maximal $\ell$ such that $2^\ell < D_{n,t,c}$.
Then the algorithm simulates a play against this injective strategy of the Adversary in the game with $\widehat{N}$ agents according to the following counter-strategy (identifying agents with binary strings of length $\widehat{n} = \log_2 \widehat{N}$). In the $r$-th round, for $\theta\in\{1, \ldots, k\}$, let $S^r_\theta\subseteq \{0,1\}^{\widehat{n}}$ be the set of agents disapproving the $\theta$-th option. Define $\Pi^r = S^r_1\times \ldots \times S^r_k$. Note that $\Pi^r$ is exactly the set of $k$-tuples in a conflict in this round. If $\Pi^r$ is empty, meaning that $S_\theta^r$ is empty for some $\theta\in\{1, \ldots, k\}$, 
we choose the option $\theta$, thus making all agents satisfied. Otherwise, we start obtaining better and better upper bounds on the conditional Kolmogorov complexity by running the optimal decompression on all inputs. If for some $\theta\in\{1, \ldots, k\}$ we find out that $\KS(\Pi^r\cnd x_\theta) < l$ for every agent $x_\theta \in S^r_\theta$, we choose the option $\theta$ and the game continues, unless this was already the last round. 
If it never finds such $\theta$, the algorithm goes into an infinite loop without finishing.

Let us start by observing that the algorithm $Alg(n,t,c)$ cannot terminate all $T$ rounds of the game. This is because the strategy of the Decision Maker that it uses guarantees that every agent is dissatisfied at most $2^{l} < D_{n,t,c}$ times. Indeed, each time an agent $x\in\{0, 1\}^n$ was dissatisfied, it is because of some non-empty  $\Pi^r$  with $\KS(\Pi^r\cnd x) < l$. There are at most $2^{l}$ such $\Pi^r$, and each can appear in at most one round due to tuple injectivity.

Hence, there exists $r \in\{ 1, \ldots, T\}$ such that the algorithm never halts when processing $\Pi^r$. This means that for every option $\theta\in\{1, 2,\ldots, k\}$ there is an agent $x_\theta\in S^r_\theta$ with $\KS(\Pi^r\cnd x_\theta)\ge l$  (otherwise we would eventually have found all optimal upper bounds on the conditional Kolmogorov complexity for some option $\theta$). By Proposition \ref{rom_gen}, we obtain that:
\begin{equation}
\label{eq_ineq}
     k\ell \le  (k - 1) \KS(\Pi^r) + i(\Pi^r) + O(\log( i(\Pi^r) + \widehat{n}))
\end{equation}

Now we bound both $C(\Pi^r)$ and $i(\Pi^r)$. We notice that $\Pi^r$ can be identified knowing $r, n, t, c$, by running $Alg(n,t,c)$ and outputting $\Pi^r$. As $r \le T = 2^t$, we need $t + O(\log (ntc))$ bits for that, obtaining the upper bound $\KS(\Pi^r) \le t + O(\log (ntc))$. We now obtain an upper bound on $i(\Pi^r)$. We notice that any given tuple $(x_1, \ldots, x_k)$ can belong to $\Pi^r$ for at most $C = 2^c$ different $r$ because of the $C$-conflict condition. Hence, to describe $\Pi^r$ given $(x_1, \ldots, x_k)$, we need $O(\log(ntc))$ bits to describe numbers $n, t,c$, and also $c$ bits to describe the index of $\Pi^r$ among all these sets of $k$-tuples that contain our tuple, in the same order in which these sets appear during the work of $Alg(n,t,c)$. This gives an upper bound $i(\Pi^r) \le c + O(\log(ntc))$.

Recalling that $\ell$ was chosen as the maximal $\ell\ge 0$ such that $2^{\ell}< D_{n,t,c}$, meaning that $2^{\ell +  1}\ge D_{n,t,c}$, we get from \eqref{eq_ineq} that
$k  \log_2 D_{n,t,c} \le (k - 1) t + c +  O(\log(c\widehat{n}t))$.
which after the exponentiation gives us an upper bound:
\begin{align*}
    D_{n,t,c} &\le (2^t)^{1 - \frac{1}{k}} (2^{c})^{\frac{1}{k}}\cdot (c\widehat{n}t)^{O(1)} \\
    &= T^{1 - \frac{1}{k}}\cdot (\ln C \cdot \ln \widehat{N}\cdot \ln(T))^{O(1)}.
\end{align*}
We have that $\widehat{N} = O(N + T)$, meaning that $\widehat{N}$ can be replaced by $N$ in the bound.
Finally, we notice how to get rid of $T$ in the logarithm. This is because we may assume that $T\le C\cdot N^k$. Indeed, the number of rounds where we cannot satisfy everyone is bounded by $C N^k$ because any such round has at least one $k$-tuple in a conflict. Therefore, any bound on the dissatisfaction we have for $T = C N^k$ are also true for all larger $T$, by satisfying everybody whenever it is possible and playing according to the optimal strategy for $T = C N^k$.

 \section{Proof of Theorem \ref{thm_lower}}

 As in the introduction,  we assume that agents vote to go either to eat pizza or to eat curry. 
For the simple majority vote, we can make the $(2T +1)$st agent dissatisfied $T$ times by making, in the $r$-th round, the $(2T +1)$-st agent approving only pizza, the $2r - 1$-st and the $2r$-th agent approving only curry, and the rest approving both. Each time, curry gets more votes, and the $(2T+1)$-st agent is dissatisfied all the time. To show that the 1-conflict condition is fulfilled, consider any set of agents $A$ of size at most 2. If $A$ is in the conflict, there has to be an agent, dissatisfied with curry, this has to be the $(2T + 1)$st agent. The other agent in $A$ has to be dissatisfied with pizza, and for every agent, there is at most 1 round like that.

 For the lower bound against any compassionate strategy, we will use the following terminology: an agent becomes ``indifferent'' means that from now on, it can only approve both options. We start by making 1 approving only pizza, 2 approving only curry, and 3, 4, ..., $N$ approving both. The Decision Maker chooses one of the options, making either 1 or 2 dissatisfied. Without loss of generality, assume that 1 was satisfied. We make 1 indifferent, ``forgetting'' about this agent. In the next round, we make 2 approving only pizza, and 3, 4, ..., $N$ approving only curry. The agent 2 is currently a single dissatisfied agent, meaning that any compassionate strategy will satisfy 2, choosing pizza and dissatisfying 3, 4, ..., $N$. We now make 2 indifferent, forgetting it, and repeat the same 2 rounds with $3, 4, ..., N$. In more detail, we maintain an invariant that after $2r$ rounds, agents $1, 2, \ldots, 2r$ are indifferent, agents $2r+1, \ldots, N$ have never been in a conflict with each other and their  dissatisfaction is $r$ and is maximal, and that every size-2 set was in a conflict at most once. Repeating the same two rounds with $2r+ 1$ and $2r+2$ in place of $1, 2$, we maintain the invariant from $r$ to $r + 1$, having in the end dissatisfaction  $\lfloor T/2\rfloor$.

\section{Acknowledgments}
Kozachinskiy is funded  by the National Center for Artificial Intelligence CENIA FB210017, Basal ANID. Shen is funded by the FLITTLA ANR-21-CE48-0023 grant. Steifer received generous support from the Millennium Science Initiative Program - Code ICN17002 and the Agencia Nacional de Investigaci\'on y
Desarrollo grant no. 3230203.

\end{document}